%
%
%
%
%
%
%
\documentclass[%
reprint,
aip,
jcp
]{revtex4-2}

\usepackage{graphicx}
\usepackage{dcolumn}
\usepackage{bm}
\usepackage{graphicx}
\usepackage{tikz}
\usepackage{subfigure}

\usepackage{amsfonts,amsmath,amssymb,amsthm}

\newtheorem{theorem}{Theorem}
\newtheorem{lemma}{Lemma}

\begin{document}


\title{
Representational power of 
selected neural network quantum states
in second quantization
}
\author{Zhendong Li}\email{zhendongli@bnu.edu.cn}
\address{Key Laboratory of Theoretical and Computational Photochemistry, Ministry of Education, College of Chemistry, Beijing Normal University, Beijing 100875, China}
\author{Tong Zhao}\email{zhaotong@ict.ac.cn}
\affiliation{State Key Lab of Processors, Institute of Computing Technology, Chinese Academy of Sciences, Beijing 100095, China}
\author{Bohan Zhang}
\address{Key Laboratory of Theoretical and Computational Photochemistry, Ministry of Education, College of Chemistry, Beijing Normal University, Beijing 100875, China}

\begin{abstract}
Neural network quantum states emerge as a promising tool for solving
quantum many-body problems. However, its successes and limitations
are still not well-understood in particular for Fermions with complex
sign structures. Based on our recent work [J. Chem. Theory Comput. 21, 10252-10262 (2025)], we generalizes the 
restricted Boltzmann machine to a more general class of states
for Fermions, formed by product of `neurons' and hence will be referred
to as neuron product states (NPS). NPS builds correlation in a very different way,
compared with the closely related correlator product states (CPS)
[H. J. Changlani, et al. Phys. Rev. B, 80, 245116 (2009)],
which use full-rank local correlators. In constrast, each correlator in NPS contains long-range correlations across all the sites, with its representational power constrained by the simple function form. 
We prove that products of such simple nonlocal correlators can approximate any wavefunction arbitrarily well under certain mild conditions on the form of activation functions. In addition, we also provide elementary proofs for the universal approximation capabilities of feedforward neural network (FNN) and neural network backflow (NNBF) in second quantization. Together, these results provide a deeper insight into the neural network representation of many-body wavefunctions in second quantization.
\end{abstract}

\maketitle

\section{Introduction}
Accurate and efficient simulation of quantum many-body problems on classical computers has been a long-standing challenge for computational physics and chemistry due to the exponential growth
of the size of the Hilbert space as system size increases.
During the past decades, a plethora of methods have been developed with their own advantages and disadvantages.
From the early days of quantum chemistry, the configuration interaction (CI) method\cite{shavitt1998history} is the most conceptually simple approach to treating correlated
electrons. Later, the hierarchy of coupled cluster (CC) methods\cite{shavitt2009many} 
becomes more dominant due to their size extensivity. 
These two methods are ''universal'', in the sense that any wavefunction can be approximated by increasing the
excitation rank. 
Tensor network states\cite{orus2014practical} (TNS), which include matrix product states (MPS) as a representative, are better choices for strongly correlated systems\cite{chan2011density}. 
They are also universal, as long as the bond dimension
can be arbitrarily large. 
The universal approximation capability of a wavefunction ansatz is important, because it provides a solid theoretical guarantee
for its ultimate accuracy.

Recently, neural networks (NN) become an emerging technique for simulation of quantum many-body problems. Carleo and Troyer employed restricted Boltzmann machine\cite{carleo2017solving} (RBM),
a generative model that can learn a probability distribution over a set of inputs\cite{le2008representational,montufar2011refinements},
as a variational ansatz for interacting spin problems on lattices, and achieved good accuracy compared with the state-of-the-art TNS. Unlike
TNS, which encodes area-law entanglement efficiently,
RBM can describe volume-law states\cite{deng2017quantum}.
Later, other machine learning architectures have also been exploited for spin systems, including feedforward NN (FNN)\cite{cai2018approximating,choo2018symmetries},
convolutional NN (CNN)\cite{liang2018solving,choo2019two},
recurrent NN (RNN)\cite{hibat2020recurrent,hibat2022supplementing},
autoregressive NN\cite{sharir2020deep,barrett2022autoregressive}, 
neural network backflow (NNBF)\cite{luo2019backflow, liu2024unifying, Liu2024, Liu2025}, etc.
The success and practical limitations of these neural network
quantum states (NQS) are not fully understood yet, and are being actively studied\cite{westerhout2020generalization,bukov2021learning,park2022expressive,viteritti2022accuracy,sharir2022neural}.
Compared with spin systems, the application of neural networks  
for Fermions remains largely unexplored until very recently\cite{nomura2017restricted,yang2020artificial,
choo2020fermionic,yoshioka2021solving,barrett2022autoregressive,robledo2022fermionic,wu2023real}. In this work, we focus on Fermions
on lattices or electrons in molecules described
within the second quantization framework, 
and we refer the readers to Ref. \cite{hermann2023ab} for applications of NN in the first quantization.

\begin{figure*}[t]
    \centering
    \includegraphics[width=1.0\textwidth]{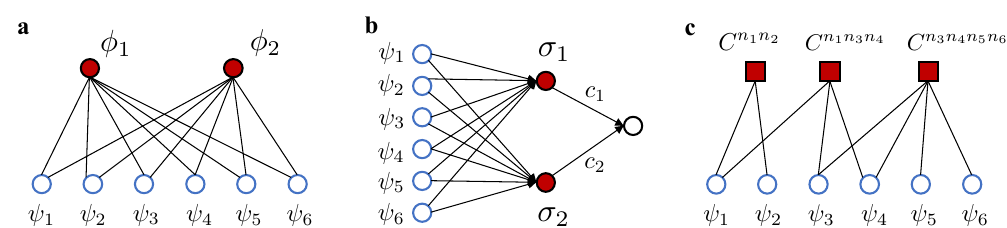}
    \caption{Illustration of different neural-network wavefunction ans\"atze for $K=6$: 
    (a) restricted Boltzmann machine (RBM) and neuron product states (NPS); 
    (b) feedforward neural networks (FNN);
    (c) correlator product states\cite{changlani2009approximating} (CPS). Each red circle in (a) and (b) represents a hidden neuron.}
    \label{fig:NPS}
\end{figure*}

Inspired by RBM, we recently introduce a more general class of states for fermions composed by product of 'neurons', which will be referred to as neuron product states\cite{wu2025hybrid} (NPS). 
NPS take the following general form
\begin{eqnarray}
\Psi_{\mathrm{NPS}}(n_1,\cdots,n_K) = \prod_{\alpha=1}^{N_h} \phi\left(b_\alpha + \sum_{k=1}^K W_{\alpha k} n_k\right),\label{eq:nps}
\end{eqnarray}
where $\Psi_{\mathrm{NPS}}(n_1,\cdots,n_K)$ is the wavefunction
in the occupation number representation, $n_k\in\{0,1\}$,
$K$ is the number of spin-orbitals, $\phi(x)$ is an activation function, $b_\alpha$ and $W_{\alpha k}$
are real parameters. Each factor $\phi$ in Eq. \eqref{eq:nps} is referred to as a neuron,
and $N_h$ is the number of neurons. 
We will denote
$\Psi_{\mathrm{NPS}}(n_1,\cdots,n_K)$ as $\Psi_{\mathrm{NPS}}(\vec{n})$
and omit the lower/upper limit of summation/product for brevity.
This form generalizes
the standard RBM for discrete probability, 
which after tracing out the hidden units can be written as
\begin{eqnarray}
P_{\mathrm{RBM}}(\vec{n})=e^{\sum_{k} a_k n_k}\prod_{\alpha}(1 + e^{b_\alpha + \sum_{k} W_{\alpha k}n_k}).\label{eq:rbm}
\end{eqnarray}
Note that each factor in RBM is positive. Carleo and Troyer\cite{carleo2017solving} use complex parameters
in RBM for wavefunctions that can take negative values.
A general $\phi(x)$ is used in NPS, which can take both positive and negative values. In Ref. \cite{wu2025hybrid}, we use
$\phi(x)=\cos(x)$ and \eqref{eq:nps} is used as a correlator
multiplied on another wavefunction ansatz. 
NPS can be pictorially represented in Fig. \eqref{fig:NPS}a.
In this work, we focus on NPS itself and ask a fundamental question that whether it can be a universal approximator.

Formally, we can also rewrite Eq. \eqref{eq:nps} as
\begin{eqnarray}
\Psi_{\mathrm{NPS}}(\vec{n}) = \exp\left[{\sum_\alpha \kappa\left(b_\alpha + \sum_{k}W_{\alpha k} n_k\right)}\right].\label{eq:nps2}
\end{eqnarray}
with $\kappa(x) = \ln\phi(x)$. The sum $\sum_a \kappa(b_\alpha + \sum_{k}W_{\alpha k} n_k)$ displays an apparent similarity with FNN.
Specifically, the FNN with one hidden layer, see Fig. \ref{fig:NPS}b,
can be written in a similar form as
\begin{eqnarray}
f_{\mathrm{FNN}}(\vec{x}) = \sum_{\alpha} c_\alpha \sigma(b_\alpha + \sum_{k} W_{\alpha k} x_k),\label{eq:FNN}
\end{eqnarray}
where $\sigma$ is an activation function. However, an important
difference is that there is no linear combination coefficients $c_\alpha$ in Eq. \eqref{eq:nps2}. Thus, the classical universal approximation theorem (UAT)\cite{funahashi1989approximate,cybenko1989approximation,hornik1989multilayer,hornik1991approximation,leshno1993multilayer} for multilayer FNN cannot be applied. Similarly, we find that the proof RBM as 
universal approximators of discrete distributions\cite{le2008representational,montufar2011refinements,huang2021neural} cannot be applied to NPS \eqref{eq:nps}, as these proofs\cite{le2008representational,huang2021neural} hinge on 
the form of factors
$1+e^x$ in Eq. \eqref{eq:rbm}, such that one can modify 
a particular amplitude of wavefunction by choosing appropriate $b_{\alpha}$ and $W_{\alpha k}$.

In this work, we focus on activation functions that are \emph{analytic} in certain domain, which means that they can be represented by Taylor series locally, i.e., $\phi(x)=\sum_{n=0}^{\infty}\frac{\phi^{(n)}(x_0)}{n!}(x-x_0)^n$.
Typical examples include elementary functions (polynomials,
exponential, trigonometric, etc.) and piecewise
defined functions. As in neural networks,
they are the most commonly used forms of activation functions\cite{ramachandran2017searching}, e.g.,
logistic sigmoid $1/(1+e^{-x})$, hyperbolic tangent $\tanh(x)$, and ReLU\cite{glorot2011deep} (rectified linear unit). 
We will show that the necessary and sufficient conditions for NPS \eqref{eq:nps} as a universal approximator
for quantum states is that (1) $\phi(x)$ can change signs
and (2) the logarithm of the activation function $\kappa(x)=\ln\phi(x)$ is not a polynomial with degree less than the number of orbitals $K$.
This result generalizes the universal property of RBM
to a more general class of variational ansatz.

Before presenting the proof, we discuss the connections of NPS with other
variational ans\"atze. A closely related class of wavefunctions is the
correlator product states (CPS)\cite{changlani2009approximating,clark2018unifying}
(including Jastrow factor and entangled plaquette states\cite{mezzacapo2009ground,glasser2018neural} as
special cases), which express the wavefunction as a product of correlators.
The tensors $C^{n_1n_2}$ and $C^{n_1n_3n_4}$ in Fig. \ref{fig:NPS}c
are examples of local two-site and nonlocal three-site correlators, respectively.
A typical two-site CPS reads
\begin{eqnarray}
\Psi_{\mathrm{CPS}}(\vec{n})=\prod_{i<j}C^{n_in_j}.\label{eq:CPS}
\end{eqnarray}
In the limit of $K$-site correlators, CPS becomes exact.
We can compare NPS with CPS, which builds the correlation in a drastically different way. Each term in the product \eqref{eq:nps} contains long-range correlations across all the sites, with the representational power
constrained by the simple function form. Thus, a 
number of products are needed to accurately describe the
wavefunction. Another interesting connection is that,
by restricting the sum in $\sum_k W_{\alpha k}n_k$ in Eq. \eqref{eq:nps}
to certain sites, a UAT for NPS will imply that every correlator can be approximated by products of $\phi$. For instance, the four-site correlator $C^{n_3n_4n_5n_6}$ in Fig. \ref{fig:NPS}c can be approximated arbitrarily well by products of functions of form $\phi(b_\alpha + \sum_{k\in \{3,4,5,6\}} W_{\alpha k}n_k)$. This connection will be clearer in the proof of UAT for NPS.

The remaining part of the paper is organized as follows.
In Secs. \ref{sec:FNN} and \ref{sec:NNB}, 
we provide elementary proofs for UAT of FNN and NNBF in second quantization, respectively, which introduce
basic notations and important techniques 
for proving UAT for NPS in Sec. \ref{sec:NPS} for activation functions
under simple conditions. The proof of UAT
for NPS for general activation functions is more
technically involved and hence is presented 
in Appendix.
Conclusion and outlook are given in the last section.

\section{Feedforward neural network (FNN) states}\label{sec:FNN}
Applying FNN in Eq. \eqref{eq:FNN} for many-body wavefunction
in second quantization leads to the FNN ansatz\cite{cai2018approximating,choo2018symmetries}
\begin{eqnarray}
\Psi_{\mathrm{FNN}}(\vec{n}) = 
\sum_{\alpha} c_\alpha \sigma(b_\alpha+\vec{w}_\alpha^T \vec{n}),\label{eq:FNN1}
\end{eqnarray}
or more compactly
\begin{eqnarray}
\Psi_{\mathrm{FNN}}(\vec{n}) = 
\vec{c}^T \sigma(\vec{b} + W \vec{n}),\label{eq:FNN2}
\end{eqnarray}
where $\vec{c}\in\mathbb{R}^{N_h}$, $W\in\mathbb{R}^{N_h\times K}$,
$\vec{b}\in\mathbb{R}^{N_h}$ with $N_h$ being the number of hidden neurons.
The $2^{K}$ occupation number vectors $\vec{n}$ form a Boolean hypercube $\mathcal{B}^K=\{0,1\}^K$. We denote the dimension of the Fock space by $D_K=2^K$. We assume that the activation function $\sigma$
satisfies
\begin{eqnarray}
    \lim_{x\rightarrow -\infty} \sigma(x)=0,\quad
    \lim_{x\rightarrow +\infty} \sigma(x)=1.
\end{eqnarray}
A typical example is the sigmoid function $\sigma(x)=\frac{1}{1+e^{-x}}$.

\begin{theorem}\label{thm:FMM}
Given sufficiently large $N_h$, the FNN ansatz \eqref{eq:FNN1} is universal for representing wavefunction
in second quantization, denoted as a vector by $\vec{\Psi} \in \mathbb{R}^{D_K}$ in the occupation number representation, in the sense that for any wavefunction $\Psi:\mathcal{B}^K \rightarrow \mathbb{R}$
and any $\epsilon>0$, there exists a FNN such that
the network output $\Psi_{\mathrm{FNN}}(\vec{n})$ satisfies
$|\Psi_{\mathrm{FNN}}(\vec{n})-\Psi(\vec{n})|<\epsilon$ for 
every $\vec{n} \in \mathcal{B}^K$.
\end{theorem}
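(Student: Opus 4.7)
The plan is to exploit the finiteness of the input space: since $\mathcal{B}^K$ contains only $D_K = 2^K$ points, ``universal approximation'' reduces to matching $\vec{\Psi}\in\mathbb{R}^{D_K}$ entry by entry. I would therefore construct the FNN as a sum of $N_h = D_K$ neurons, each tuned to act as a smooth indicator of a single occupation vector $\vec{n}_0 \in \mathcal{B}^K$, weighted by the desired amplitude $\Psi(\vec{n}_0)$.

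The core technical step is to exhibit a linear form that separates $\vec{n}_0$ from every other $\vec{n}\in\mathcal{B}^K$. I would take $(\vec{w}_{\vec{n}_0})_k = +1$ when $(n_0)_k = 1$ and $(\vec{w}_{\vec{n}_0})_k = -1$ otherwise, and pick the bias $b_{\vec{n}_0} = -N_{\vec{n}_0} + 1/2$, where $N_{\vec{n}_0} = \sum_k (n_0)_k$ is the number of occupied orbitals in $\vec{n}_0$. A short case analysis shows $\vec{w}_{\vec{n}_0}^T \vec{n} + b_{\vec{n}_0} = 1/2$ at $\vec{n} = \vec{n}_0$, and $\leq -1/2$ at every other $\vec{n}\in\mathcal{B}^K$, because any single-orbital discrepancy changes the inner product by at least one. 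Rescaling the argument by a large $M>0$ and invoking the assumed limits of $\sigma$, the neuron $\sigma(M(\vec{w}_{\vec{n}_0}^T\vec{n} + b_{\vec{n}_0}))$ is driven arbitrarily close to $1$ at $\vec{n}_0$ and to $0$ at every other point of $\mathcal{B}^K$.

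With these $D_K$ quasi-indicator neurons in hand, setting the output coefficients $c_{\vec{n}_0} = \Psi(\vec{n}_0)$ produces an FNN whose value at each $\vec{n}$ equals $\Psi(\vec{n})$ up to residuals from imperfect sigmoid saturation. A triangle-inequality bound on the pointwise error by $D_K \cdot \max_{\vec{n}_0}|\Psi(\vec{n}_0)|$ times the maximal deviation of a sigmoid from its limit value shows that choosing $M$ large enough achieves the required uniform error $\epsilon$.

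The main obstacle is relatively mild: the discreteness of $\mathcal{B}^K$ sidesteps the compactness and density arguments that dominate the classical continuous UAT, so the only genuinely nontrivial ingredient is the construction of the separating linear form together with its guaranteed unit gap. Once the separation is in place, the rest is counting and the triangle inequality. The same separating construction will reappear in adapted form in the later proofs of UAT for NNBF and NPS, and is worth setting up carefully here.
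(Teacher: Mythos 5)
Your proposal is correct and follows essentially the same route as the paper: one neuron per configuration, a $\pm 1$ weight vector (the paper's $\vec{z}_i = 2\vec{n}_i - \vec{e}$, rescaled) with a bias giving a uniform gap between $\vec{n}_0$ and all other vertices, saturation of $\sigma$ under a large scaling factor to produce quasi-indicator functions, and output coefficients set to the amplitudes $\Psi(\vec{n}_0)$. Your explicit triangle-inequality error bound is slightly more careful than the paper's closing remark that the matrix of neuron outputs "can be made arbitrarily close to an identity matrix," but the argument is the same.
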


In fact, this follows directly from the UAT for functions with continuous variables, by embedding the Boolean hypercube $\mathcal{B}^K$ into $\mathbb{R}^K$. But we give a very elementary proof of it for discrete variables using the following lemma.

\begin{lemma}\label{lemma:ziz}
Let $\vec{z}=2\vec{n}-\vec{e} \in \{-1,1\}^K$ with $\vec{e}=(1,\cdots,1)^T$, then 
for $\vec{z}\in\{\vec{z}_j\}_{j=0}^{D_K-1}$, 
$\vec{z}_i\cdot\vec{z} \in \{K,K-2,\cdots,-(K-2),-K\}$.
The maximal value $K$ is reached for $\vec{z}=\vec{z}_i$,
while the minimal value $-K$ is reached for $\vec{z}=-\vec{z}_i$. 
\end{lemma}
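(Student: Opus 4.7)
The plan is straightforward since this is essentially a combinatorial identity for bipolar vectors. First, I would note that by the definition $\vec{z} = 2\vec{n} - \vec{e}$, each component $z_k$ lies in $\{-1,+1\}$, so for any two such vectors $\vec{z}_i$ and $\vec{z}$ the coordinatewise product $z_{i,k} z_k$ is also in $\{-1,+1\}$.

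Next, I would let $m = |\{k : z_{i,k} = z_k\}|$ denote the number of coordinates on which $\vec{z}_i$ and $\vec{z}$ agree, so that $K - m$ coordinates disagree. Splitting the inner product accordingly gives
\begin{equation}
\vec{z}_i \cdot \vec{z} \;=\; \sum_{k=1}^K z_{i,k} z_k \;=\; m\cdot(+1) + (K - m)\cdot(-1) \;=\; 2m - K.
\end{equation}
As $m$ ranges over $\{0,1,\ldots,K\}$, the expression $2m - K$ takes precisely the values $\{-K, -K+2, \ldots, K-2, K\}$, which is the stated set.

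Finally, for the extremes I would observe that $m = K$ means every coordinate agrees, i.e.\ $\vec{z} = \vec{z}_i$, giving the maximum $K$; and $m = 0$ means every coordinate disagrees, i.e.\ $\vec{z} = -\vec{z}_i$, giving the minimum $-K$. There is no real obstacle here — the entire proof is a counting argument on agreements versus disagreements of $\pm 1$ entries — so the main thing is simply to phrase it cleanly enough that it can be invoked later when bounding sums like $\sigma(b_\alpha + \vec{w}_\alpha^T \vec{n})$ by isolating a chosen configuration $\vec{n}_i$ from the rest of the Boolean hypercube.
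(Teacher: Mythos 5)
Your proof is correct and is exactly the ``simple calculation'' the paper alludes to when it declines to write out a proof: counting agreements $m$ versus disagreements $K-m$ of the $\pm 1$ entries gives $\vec{z}_i\cdot\vec{z}=2m-K$, with the extremes at $m=K$ and $m=0$. No substantive difference from the paper's (implicit) argument.
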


Without the loss of generality, the $D_K$ vectors $\vec{n}_j$ and $\vec{z}_j$ are assumed to be labeled lexicographically. This lemma can be verified by simple calculations. Now we prove Theorem \ref{thm:FMM}.

\begin{proof}
Introduce a matrix $F\in\mathbb{R}^{N_h\times D_K}$ with elements
\begin{eqnarray}
    F_{\alpha j} = \sigma(b_\alpha  + \vec{w}^T_\alpha  \vec{n}_j).
\end{eqnarray}
We want to show that with $N_h=2^K$, 
$F$ can be made arbitrarily close to an identity matrix
by choosing $(\vec{b},W)$ appropriately. Then, $\vec{c}=\vec{\Psi}$ showing that the weights of FNN store the
wavefunction in such case.

For each $i$, we have $\vec{z}_i^T(\vec{z}-\vec{z}_i) \in \{0,-2,\cdots,-2K\}$ using Lemma \ref{lemma:ziz}
and $\vec{z}_i^T\vec{z}_i=K$,
such that $\vec{z}_i^T(\vec{z}-\vec{z}_i)+1 
= 2\vec{z}_i^T \vec{n} - \vec{z}_i^T\vec{e} - K + 1
\in \{1,-1,\cdots,-2K+1\}$. 
By choosing $\vec{w}_i=2\theta\vec{z}_i$,
$b_i = \theta(- \vec{z}_i^T \vec{e}-K+1)$, and let 
$\theta\rightarrow +\infty$, we obtain
\begin{eqnarray}
\vec{w}_i^T\vec{n}_j+b_i \rightarrow +\infty,\quad \sigma(\vec{w}_i^T\vec{n}_j+b_i) \rightarrow 1,\quad j=i,\\
\vec{w}_i^T\vec{n}_j+b_i \rightarrow -\infty,\quad \sigma(\vec{w}_i^T\vec{n}_j+b_i) \rightarrow 0,\quad j\ne i.
\end{eqnarray}
Then, $F$ can be made arbitrarily close to an identity matrix; hence the FNN ansatz is universal up to arbitrary precision.
\end{proof}

\section{Neural network backflow (NNBF) states}\label{sec:NNB}
The NNBF ansatz\cite{luo2019backflow, liu2024unifying, Liu2024, Liu2025}
in second quantization is defined by
\begin{eqnarray}
\Psi_{\mathrm{NNBF}}(\vec{n}) = \det[\phi_{p_k m}(\vec{n})],\label{eq:NNBF}    
\end{eqnarray}
where $m\in\{1,\cdots,N\}$ with $N$ being
the number of electrons, $p_k$ ($k\in \{1,\cdots,N\}$) 
represent the indices of occupied orbitals in $\vec{n}$, and
$\phi_{pm}(\vec{n})$ can be viewed as a set of
configuration-dependent orbitals generated by a FNN via
\begin{eqnarray}
\phi_{pm}(\vec{n}) = 
\vec{c}_{pm}^T \sigma(\vec{b} + W\vec{n}),\label{eq:NNBForb}
\end{eqnarray}
where $\vec{c}_{pm} \in \mathbb{R}^{N_h}$ and $p\in\{1,\cdots,K\}$.

\begin{theorem}
The NNBF ansatz \eqref{eq:NNBF} is universal for sufficiently large $N_h$.
\end{theorem}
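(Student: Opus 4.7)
The plan is to reduce the universality of NNBF directly to Theorem~\ref{thm:FMM}, observing that each backflow orbital $\phi_{pm}$ in \eqref{eq:NNBForb} is itself a one-hidden-layer FNN, with the hidden-layer weights $(\vec{b}, W)$ \emph{shared} across all $(p, m)$ indices while the output vectors $\vec{c}_{pm}\in\mathbb{R}^{N_h}$ are \emph{independent}. Because the NNBF form is intrinsically restricted to the fixed $N$-electron sector (the $N \times N$ determinant only makes sense for configurations with exactly $N$ occupied orbitals), universality is understood within that sector.

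First I would import the hidden-layer construction from the proof of Theorem~\ref{thm:FMM}: take $N_h = D_K = 2^K$ and use the same scaling $\vec{w}_i = 2\theta\vec{z}_i$, $b_i = \theta(-\vec{z}_i^T \vec{e} - K + 1)$, so that in the limit $\theta\to\infty$ one has $\sigma(\vec{b}+W\vec{n}_j)\to \vec{e}_j\in\mathbb{R}^{D_K}$, the $j$-th standard basis vector. Because the hidden layer is shared, this instantly yields
\begin{eqnarray*}
\phi_{pm}(\vec{n}_j) = \vec{c}_{pm}^{T}\sigma(\vec{b}+W\vec{n}_j) \;\longrightarrow\; (\vec{c}_{pm})_j,
\end{eqnarray*}
so all $K\cdot N$ orbital function values can be prescribed at every configuration independently by simply choosing the $j$-th coordinate of $\vec{c}_{pm}$.

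Next I would design target values that make the determinant trivial. For each $N$-electron configuration $\vec{n}_j$ with sorted occupied orbitals $p_1^{(j)} < \cdots < p_N^{(j)}$, set
\begin{eqnarray*}
f_{pm}(\vec{n}_j) \;=\; \Psi(\vec{n}_j)\,\delta_{p, p_1^{(j)}}\delta_{m,1} \;+\; \sum_{k=2}^{N} \delta_{p, p_k^{(j)}}\delta_{m,k},
\end{eqnarray*}
and define $(\vec{c}_{pm})_j := f_{pm}(\vec{n}_j)$ (values at non-$N$-electron $j$ are irrelevant for the ansatz). The selected $N\times N$ matrix $[f_{p_k^{(j)}, m}(\vec{n}_j)]_{k,m=1}^{N}$ is then diagonal with entries $(\Psi(\vec{n}_j), 1, \ldots, 1)$, whose determinant is exactly $\Psi(\vec{n}_j)$. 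Continuity of the $N\times N$ determinant then transfers the uniform approximation of step one to the wavefunction level: for $\theta$ large enough, $\max_{k,m,j}|\phi_{p_k^{(j)}, m}(\vec{n}_j) - f_{p_k^{(j)}, m}(\vec{n}_j)| < \delta$ implies $|\Psi_{\mathrm{NNBF}}(\vec{n}_j) - \Psi(\vec{n}_j)| \le C(N, \|\Psi\|_\infty)\,\delta$, and taking $\delta$ sufficiently small yields the required bound by $\epsilon$.

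The main (mild) conceptual point is the diagonal-target trick: routing all of the nontrivial amplitude $\Psi(\vec{n}_j)$ into a single diagonal entry reduces the combinatorial task of matching $\binom{K}{N}$ determinant values to the task of matching $K\cdot N$ independent function values, which is precisely what the shared-hidden-layer FNN of Theorem~\ref{thm:FMM} can deliver. Beyond this design choice there is no genuine obstacle; the heavy lifting is done by Theorem~\ref{thm:FMM}.
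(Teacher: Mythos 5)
Your proposal is correct and follows essentially the same route as the paper: reuse the hidden-layer construction of Theorem~\ref{thm:FMM} so that $\phi_{pm}(\vec{n}_j)\to(\vec{c}_{pm})_j$, then choose those values so the selected $N\times N$ matrix is diagonal with entries $(\Psi(\vec{n}_j),1,\ldots,1)$. The only additions beyond the paper's argument are the explicit continuity-of-determinant step and the remark about restricting to the $N$-electron sector, both of which the paper leaves implicit.
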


\begin{proof}
By the proof in the previous theorem, one can choose $N_h=D_K$ and
$(b_\alpha,\vec{w}_\alpha)$ such that for given $\vec{n}_i$, 
only one term in the summation \eqref{eq:NNBForb} 
contributes 
\begin{eqnarray}
\phi_{pm}(\vec{n}_i) = c_{pm,i},
\end{eqnarray}
then the wavefunction amplitude is simply
\begin{eqnarray}
\Psi_{\mathrm{NNBF}}(\vec{n}_i)=
\det\left[
\begin{array}{cccc}
c_{p_1 1,i} & c_{p_1 2,i} & \cdots & c_{p_1 N,i} \\
c_{p_2 1,i} & c_{p_2 2,i} & \cdots & c_{p_2 N,i} \\
\vdots & \vdots & \ddots & \vdots \\
c_{p_N 1,i} & c_{p_N 2,i} & \cdots & c_{p_N N,i} 
\end{array}
\right].
\end{eqnarray}
By choosing $c_{p_1 1,i}=\Psi(n_i)$, $c_{p_m m,i}=1$ for $m=2,\cdots,N$,
and $c_{p_k m,i}=0$ for other entries, it is seen that the NNBF ansatz
is universal.
\end{proof}

Remarks: For a single (Hartree-Fock) determinant ansatz, 
$c_{p_k m,i}$ is independent of $i$.

\section{Neural network product states (NPS)}\label{sec:NPS}
We assume that the target wavefunction 
$\Psi(\vec{n})$ is real, and 
all the parameters $(W,\vec{b})$ are also real in
NPS defined in Eq. \eqref{eq:nps}.
Besides, we assume that activation function $\sigma$ is in $(-1,1)$ and satisfies
\begin{eqnarray}
    \lim_{x\rightarrow +\infty} \sigma(x)=1.\label{eq:nps-condition}
\end{eqnarray}
A typical example is $f(x)=\tanh(x)$. Under such condition,
we can give an elementary proof that

\begin{theorem}\label{thm:nnp}
The NPS ansatz with activation function $\sigma\in(-1,1)$ and
satisfying Eq. \eqref{eq:nps-condition}
is universal for sufficiently large $N_h$.
\end{theorem}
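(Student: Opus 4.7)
The plan is to adapt the indicator-neuron construction of Theorem~\ref{thm:FMM} to the multiplicative setting of NPS. Instead of building neurons that ``fire'' on a single configuration and then summing (as with FNN), I would build neurons that evaluate to $1$ on every configuration except one target $\vec{n}_i$, where they take a prescribed value; the product over $D_K$ such neurons evaluated at any $\vec{n}_j$ is then dominated by the single on-target factor and equals $\Psi(\vec{n}_j)$ in the limit.

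First I would reduce to the case $|\Psi(\vec{n})|<1$ for all $\vec{n}\in\mathcal{B}^K$. This is a natural reduction because any NPS output is itself a product of numbers in $(-1,1)$ and so necessarily lies in $(-1,1)$; universality is thus understood modulo an overall constant scaling, or equivalently under the assumption $\sup_{\vec{n}}|\Psi(\vec{n})|<1$. Next, for each $i\in\{0,\ldots,D_K-1\}$ I would assign a neuron with weights $\vec{w}_i=-c\,\vec{z}_i$ and bias $b_i=y_i+\frac{c}{2}(K+\vec{z}_i^T\vec{e})$, where $\vec{z}_i=2\vec{n}_i-\vec{e}$ is the encoding of Lemma~\ref{lemma:ziz}, $c>0$ is a sharpness parameter, and $y_i$ is chosen so that $\sigma(y_i)=\Psi(\vec{n}_i)$ (which exists by continuity of $\sigma$ together with the range assumption). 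Using Lemma~\ref{lemma:ziz}, the argument of this neuron at $\vec{n}_j$ equals $y_i+\frac{c}{2}(K-\vec{z}_i^T\vec{z}_j)$: exactly $y_i$ when $j=i$ (because $\vec{z}_i^T\vec{z}_i=K$) and at least $y_i+c$ when $j\ne i$ (because $\vec{z}_i^T\vec{z}_j\le K-2$).

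Taking the product over all $N_h=D_K$ such neurons and letting $c\to+\infty$, the on-target factor at any $\vec{n}_j$ remains exactly $\sigma(y_j)=\Psi(\vec{n}_j)$, while each of the $D_K-1$ off-target factors tends to $\sigma(+\infty)=1$ by \eqref{eq:nps-condition}. The main obstacle I anticipate is not the construction itself but the quantitative error control: with $D_K-1$ off-target factors each close to $1$, a naive bound on their product gives an error that scales like $(D_K-1)\delta$, where $\delta$ is the per-factor deviation from $1$. To attain uniform accuracy $\epsilon$ one therefore needs to choose $c$ large enough that every off-target factor lies within $O(\epsilon/D_K)$ of $1$; this is guaranteed by \eqref{eq:nps-condition} but must be made explicit. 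A secondary subtlety is that the range of $\sigma$ must be broad enough to realize each value $\Psi(\vec{n}_i)\in(-1,1)$, which the normalization addresses provided $\sigma$ attains values close to both ends of $(-1,1)$ by continuity.
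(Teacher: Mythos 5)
Your proposal is correct and follows essentially the same route as the paper: one neuron per configuration $\vec{n}_i$, tuned so that it takes the value $\Psi(\vec{n}_i)$ on its target and is driven to $1$ on every other configuration as the sharpness parameter grows, with the product of all $D_K$ such neurons then approximating $\Psi$ to within any $\epsilon$. The only difference is that you exhibit the separating direction explicitly as $-\vec{z}_i$ via Lemma~\ref{lemma:ziz}, whereas the paper establishes its existence abstractly through Gordan's lemma (Lemma~\ref{thm:vset}); your explicit choice is a valid and slightly more elementary substitute.
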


In fact, NPS with more general activation function can also be proved universal.

\begin{theorem}\label{thm:NPSgeneral}
The NPS ansatz is universal if and only if
\begin{enumerate}
\item $\phi(x)$ can produce both positive and negative values (i.e., $\exists x_1, x_2$ such that $\phi(x_1) > 0$ and $\phi(x_2) < 0$),
\item $\ln\phi(x)$ is not a polynomial of degree less than $K$.
\end{enumerate}
\end{theorem}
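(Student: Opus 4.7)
The plan is to prove the two directions separately. For necessity: if $\phi$ has constant sign, then $\Psi_{\mathrm{NPS}}$ has a fixed sign $(\pm 1)^{N_h}$, so no wavefunction with amplitudes of both signs is representable. If $\kappa = \ln\phi$ (on any region where $\phi > 0$) equals a polynomial of degree $d < K$, then each term $\kappa(b_\alpha + \vec{w}_\alpha\cdot\vec{n})$, after the reduction $n_k^2 = n_k$, becomes a multilinear polynomial of total degree $\leq d$; sums of any number of such terms stay inside the $\sum_{j\leq d}\binom{K}{j} < 2^K$-dimensional subspace of low-degree multilinear polynomials, so $\ln|\Psi_{\mathrm{NPS}}|$ cannot realize an arbitrary function on $\{0,1\}^K$.

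For sufficiency I would split the construction into an approximation of $|\Psi|$ that exploits the non-polynomial character of $\kappa$, together with an approximation of $\mathrm{sgn}(\Psi)$ that exploits the sign-changing property of $\phi$. Fix an interval $I\subset\mathbb{R}$ on which $\phi>0$ and constrain the magnitude-approximating neurons so that $b+\vec{w}\cdot\vec{n}\in I$ for every $\vec{n}$; then $\kappa$ is real analytic on $I$, and the hypothesis that $\ln\phi$ is not a polynomial of degree $<K$ translates to $\kappa^{(K)}\not\equiv 0$. The core lemma I would establish is a linear span statement: the family $\{\vec{n}\mapsto\kappa(b+\vec{w}\cdot\vec{n})\}_{(b,\vec{w})}$ linearly spans $\mathbb{R}^{\{0,1\}^K}$. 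The proof is by duality; if a signed measure $\mu_{\vec{n}}$ annihilates the family, repeated differentiation in $\vec{w}$ at $\vec{w}=0$ produces
\[
\kappa^{(|\alpha|)}(b)\sum_{\vec{n}}\mu_{\vec{n}}\prod_{k:\alpha_k\geq 1}n_k = 0
\quad\text{for every }b\text{ and multi-index }\alpha.
\]
For each nonempty $S\subseteq[K]$, choose $\alpha$ with $\{k:\alpha_k\geq 1\}=S$ and $|\alpha|=K$ (inflating one $\alpha_k$ if $|S|<K$); since $\kappa^{(K)}\not\equiv 0$, this forces $\sum_{\vec{n}}\mu_{\vec{n}}\prod_{k\in S}n_k = 0$, while the $S=\emptyset$ case follows from $\kappa\not\equiv 0$. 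As the multilinear monomials form a basis of $\mathbb{R}^{\{0,1\}^K}$, $\mu\equiv 0$.

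The next step converts the real-coefficient linear combination $f=\sum_j c_j\kappa(b_j+\vec{w}_j\cdot\vec{n})$ produced by the lemma into a unit-coefficient NPS sum. I would use a scaling argument anchored at a zero $y_0$ of $\kappa$ (i.e., $\phi(y_0)=1$, which exists because $\phi$ changes sign continuously on $I$): the expansion $\kappa(y_0+\epsilon y)=\epsilon\kappa'(y_0)y+O(\epsilon^2 y^2)$ shows that a block of $M$ small-contribution neurons with $\epsilon\sim c_j/M$ sums to approximately $c_j\kappa'(y_0)(b_j+\vec{w}_j\cdot\vec{n})$; iterating this construction at several orders of the Taylor expansion of $\kappa$ recovers the higher-body multilinear directions needed to reproduce an arbitrary target. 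Finally, for each $\vec{n}_i$ where the target amplitude is negative, I would insert a sign-flipping neuron built as in Lemma~\ref{lemma:ziz}: with $\vec{w}_\alpha=2\theta\vec{z}_i$ and $b_\alpha=\theta(-\vec{z}_i\cdot\vec{e}-K+1)+\Delta$ for suitable finite $(\theta,\Delta)$, the argument $b_\alpha+\vec{w}_\alpha\cdot\vec{n}$ can be driven to a point where $\phi<0$ when $\vec{n}=\vec{n}_i$ and to a point where $\phi>0$ otherwise, flipping only the sign at $\vec{n}_i$; the resulting magnitude distortion is absorbed into additional magnitude-approximating neurons.

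The main obstacle is the bridge from the real-coefficient combinations produced by the span lemma to unit-coefficient NPS sums. The Riemann-sum-style scaling must match all $2^K$ components simultaneously while suppressing uncanceled higher-order Taylor corrections, and it must remain compatible with the sign-flipping neurons inserted at the end. A secondary difficulty is constructing sign-flippers when $\phi$ has no limits at $\pm\infty$ (e.g., $\phi(x)=\cos x$): the asymptotic argument underlying Lemma~\ref{lemma:ziz} is unavailable, and one must instead locate, via the intermediate-value theorem applied to the oscillations of $\phi$, finite parameter values realizing the required sign pattern on all $2^K$ configurations at once. These complications are what push the general-$\phi$ proof into the appendix.
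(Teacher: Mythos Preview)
Your necessity argument matches the paper's and is correct. Your span lemma (via differentiating the identity $\sum_{\vec n}\mu_{\vec n}\kappa(b+\vec w\cdot\vec n)=0$ and invoking $\kappa^{(K)}\not\equiv 0$) is a valid and rather elegant alternative to the paper's route: instead of proving an abstract density statement, the paper works directly in the multilinear (Walsh--Hadamard) basis, computes the Fourier coefficient $\hat\kappa^{x_1\cdots x_K}(b,\vec\omega)=\kappa^{(\sum_k x_k)}(b)\prod_k\omega_k^{x_k}+O(\|\omega\|^{\sum_k x_k+2})$, and then matches the coefficients of $\ln\tilde\Psi_+$ \emph{tier by tier}, from degree $K$ down to degree $1$, using a separate neuron $\kappa(b_{I_{t,k}}+\vec\omega_{I_{t,k}}^T\vec z_{I_{t,k}})$ supported only on the subset $I_{t,k}$.

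The genuine gap in your proposal is precisely the step you flag as the ``main obstacle'': passing from a real-coefficient combination $\sum_j c_j\kappa(b_j+\vec w_j\cdot\vec n)$ to a unit-coefficient NPS sum. Your Riemann-sum block around a zero $y_0$ of $\kappa$ produces, to leading order, only the \emph{linear} function $c_j\kappa'(y_0)(b_j+\vec w_j\cdot\vec n)$; the phrase ``iterating at several orders of the Taylor expansion'' hides the real difficulty, since higher Taylor orders enter simultaneously and with uncontrolled cross-contamination. (Also, a zero of $\kappa$ need not exist: if $\phi\in(0,1)$ on $I$, as for $\tanh$ on $(0,\infty)$, then $\kappa<0$ everywhere.) The paper avoids this entirely: by taking $\vec\omega$ small, the leading Fourier coefficient $\hat\kappa$ of a single neuron can be made arbitrarily small and of either sign (flipping one $\omega_k$), so the integer $N=\mathrm{round}(\hat g/\hat\kappa)$ satisfies $|\hat g-N\hat\kappa|\le|\hat\kappa|/2\le\delta$. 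The tier-by-tier ordering ensures each step only pollutes \emph{lower}-degree coefficients, which are corrected later. This is the missing mechanism in your sketch.

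Your sign-flipping step is also incomplete for exactly the reason you identify. The paper's fix is not to send $\theta\to\infty$ but to scale \emph{down}: take any separating hyperplane $\vec u^T\vec n+c$ that isolates $\vec n_i$, set $L(\vec n)=\theta(\vec u^T\vec n+c)+x_0$ with $\theta$ small enough that all $2^K$ values of $L$ lie inside a neighborhood $(x_0-\epsilon,x_0+\epsilon)$ on which $\phi$ has the required signs on each side of $x_0$. This works for any continuous $\phi$ with a sign change, including $\cos$, and requires no asymptotics.
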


However, its proof is more technically involved, and hence we present
it in Appendix. The proof of Theorem \ref{thm:nnp} is simpler based
on the following lemma.

\begin{lemma}[Gordan's lemma]
Let $A\in \mathbb{R}^{m\times n}$ be a matrix. Then exactly one of the following
statement is true:
\begin{enumerate}
\item There exists $\vec{x}\in\mathbb{R}^n$ such that $A\vec{x}>0$ (componentwise).
\item There exists $\vec{y}\in\mathbb{R}^m$, $\vec{y}\ne \vec{0}$, with $y_k\ge 0$ (componentwise)
such that $A^T\vec{y}=0$.
\end{enumerate}
\end{lemma}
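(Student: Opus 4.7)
The plan is to prove Gordan's lemma in two parts: first, that the two alternatives cannot both hold (a one-line computation), and second, the substantive direction that at least one of them must hold, which rests on the separating hyperplane theorem applied to the range of $A$ and the open positive orthant.

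For mutual exclusivity, I would suppose for contradiction that both alternatives hold simultaneously. Then the scalar $\vec{y}^T A \vec{x}$ admits two evaluations: on the one hand, $\vec{y}^T A \vec{x} = (A^T \vec{y})^T \vec{x} = 0$ from alternative 2; on the other hand, writing $\vec{y}^T (A\vec{x}) = \sum_k y_k (A\vec{x})_k$, every term is nonnegative since $y_k \ge 0$ and $(A\vec{x})_k > 0$, and at least one term is strictly positive because $\vec{y} \ne \vec{0}$. This strict inequality contradicts the vanishing, ruling out coexistence.

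For the existence of at least one alternative, I would introduce the linear subspace $C = \{A\vec{x} : \vec{x} \in \mathbb{R}^n\} \subset \mathbb{R}^m$ (the column space of $A$) and the open convex cone $P = \{\vec{z} \in \mathbb{R}^m : z_k > 0 \text{ for all } k\}$. If $C \cap P \ne \emptyset$, then alternative 1 holds directly. Otherwise, $C$ and $P$ are disjoint nonempty convex sets with $P$ open, so the separating hyperplane theorem yields a nonzero $\vec{y} \in \mathbb{R}^m$ and a scalar $\alpha$ with $\vec{y}^T \vec{c} \le \alpha \le \vec{y}^T \vec{p}$ for all $\vec{c} \in C$ and $\vec{p} \in P$. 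Because $C$ is a linear subspace, the map $\vec{c} \mapsto \vec{y}^T \vec{c}$ is a linear functional on $C$ that is bounded above by $\alpha$; such a functional must vanish identically on $C$, forcing $\alpha \ge 0$, $\vec{y}^T A \vec{x} = 0$ for all $\vec{x}$, and hence $A^T \vec{y} = \vec{0}$. For the sign of $\vec{y}$, I would approach any standard basis vector $\vec{e}_k$ by points of $P$ (e.g., $\vec{e}_k + \varepsilon(1,\dots,1)^T$ as $\varepsilon \downarrow 0$) and pass to the limit in $\vec{y}^T \vec{p} \ge \alpha \ge 0$ to conclude $y_k \ge 0$ for every $k$. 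This delivers precisely alternative 2.

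The main obstacle, and the reason the argument must be set up with care, is choosing the correct pair of convex sets for the separation step. Separating $C$ from the closed nonnegative orthant $\mathbb{R}^m_{\ge 0}$ is doomed because the two sets always share the origin, so no strict separation can exist; the one-sided separation that results then may fail to produce a nonzero $\vec{y}$ with the needed sign structure. Using the open orthant $P$ guarantees that $P$ is nonempty and open, both disjointness hypotheses of the separating hyperplane theorem are genuinely met in the nontrivial case, and a valid nonzero separator is produced. The limit argument to propagate $\vec{y}^T\vec{p} \ge 0$ from the open orthant to its closure is then the only subtlety remaining, and it is immediate by continuity.
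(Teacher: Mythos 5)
Your proof is correct, but note that the paper does not actually prove Gordan's lemma at all --- it states it and defers to a reference (\cite{perng2015note}), treating it as a known fact from convex analysis. So there is nothing in the paper to compare against; what you have supplied is the standard self-contained argument. Both halves check out: the exclusivity computation $\vec{y}^T A\vec{x}$ evaluated two ways is the usual one-liner, and the existence half correctly separates the column space $C=\mathrm{range}(A)$ from the \emph{open} positive orthant $P$, uses the fact that a linear functional bounded above on a subspace must vanish on it (which also gives $\alpha\ge 0$ since $\vec{0}\in C$), and recovers $y_k\ge 0$ by letting $\vec{e}_k+\varepsilon(1,\dots,1)^T\in P$ tend to $\vec{e}_k$. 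Your remark about why one must separate from the open orthant rather than the closed one (they would share the origin with $C$) is exactly the right subtlety to flag. The only point worth stating more explicitly is which version of the separation theorem you invoke: in $\mathbb{R}^m$ weak separation holds for \emph{any} two disjoint nonempty convex sets, so openness of $P$ is not needed to obtain the separator, only to guarantee disjointness is a nontrivial hypothesis and that the limiting argument on the closure goes through; as written this is a presentational quibble, not a gap.
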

This is a fundamental result in linear algebra and convex analysis\cite{perng2015note}. Using it, we first prove the following
useful result.

\begin{lemma}\label{thm:vset}
Let $V$ denotes the set of vectors 
\begin{eqnarray}
V = \{\vec{n}_j - \vec{n}_i \,:\,j\ne i\},\label{eq:Vset}
\end{eqnarray}
then there exists a vector $\vec{u}\in\mathbb{R}^K$ such that $\vec{u}^T\vec{v}>0,\,\forall \vec{v}\in V$.
\end{lemma}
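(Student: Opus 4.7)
The plan is to recast the conclusion as Statement 1 of Gordan's lemma applied to the matrix $A$ whose rows are the vectors in $V$. Under the lexicographic labeling from the remark after Lemma~\ref{lemma:ziz}, $V$ should be read as the ordered set of differences $\vec{n}_j - \vec{n}_i$ with $j > i$ (otherwise $V$ contains the antipodal pair $\vec{v}, -\vec{v}$ and the stated conclusion is impossible). Finding $\vec{u}$ with $\vec{u}^T \vec{v} > 0$ for every $\vec{v} \in V$ is exactly the assertion $A\vec{u} > 0$ componentwise, so by Gordan's dichotomy it suffices to rule out Statement 2: that some nonzero $\vec{y} \ge 0$ satisfies $\sum_{j > i} y_{ij}(\vec{n}_j - \vec{n}_i) = \vec{0}$.

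To discard Statement 2 I would exhibit a separating functional explicitly. The canonical choice is $\vec{u}_\star = (2^{K-1}, 2^{K-2}, \ldots, 2, 1)^T$, which sends each $\vec{n} \in \{0,1\}^K$ to the integer it encodes in binary. Under lexicographic ordering, this integer is strictly increasing in the configuration index, so $\vec{u}_\star^T(\vec{n}_j - \vec{n}_i) > 0$ whenever $j > i$. Contracting the hypothetical Statement 2 relation against $\vec{u}_\star$ then produces a sum of nonnegative numbers that vanishes yet contains at least one strictly positive term, a contradiction. Hence Statement 2 fails and Gordan's lemma yields the desired $\vec{u}$.

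The main obstacle is organizational rather than technical: one must first commit to an ordering convention on the pairs indexing $V$ so that a single linear functional can even be strictly positive on every element of $V$. Once the lex ordering is pinned down, $\vec{u}_\star$ above does double duty, directly exhibiting a valid $\vec{u}$ and simultaneously falsifying Statement 2. The Gordan's lemma framing is therefore mostly conceptual here, but it is worth preserving because the same separation-in-$\mathbb{R}^K$ viewpoint is what one wants to leverage in the subsequent proof of Theorem~\ref{thm:nnp}, where the positivity of $\vec{u}_\star^T(\vec{n}_j - \vec{n}_i)$ will let one tune scalings $\theta \vec{u}_\star$ so that activations saturate in the desired direction on the lex-greater configurations and in the opposite direction on the lex-smaller ones.
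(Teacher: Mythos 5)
There is a genuine gap, and it stems from a misreading of the quantifiers in the statement. In the paper the index $i$ is \emph{fixed} and $V=V_i=\{\vec{n}_j-\vec{n}_i : j\ne i\}$ collects the differences from the single configuration $\vec{n}_i$ to \emph{all} other configurations; this is clear both from the remark following the lemma (which rules out antipodal pairs precisely because both differences are taken relative to the same $\vec{n}_i$) and from the way the lemma is invoked in the proof of Theorem~\ref{thm:nnp}, where for each fixed $i$ one needs $\phi(\vec{w}_i^T\vec{n}_j+b_i)\to 1$ for \emph{every} $j\ne i$, not just the lexicographically larger ones. Your reinterpretation of $V$ as the set of differences with $j>i$ changes the statement into a weaker one, and your proposed separating vector $\vec{u}_\star=(2^{K-1},\dots,2,1)^T$ only certifies $\vec{u}_\star^T(\vec{n}_j-\vec{n}_i)>0$ for $j>i$; for $j<i$ it gives a strictly \emph{negative} inner product. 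Consequently the construction in Theorem~\ref{thm:nnp} would break: the activations would saturate toward $1$ only on the lex-greater configurations and toward the wrong limit on the lex-smaller ones, so the product ansatz would not reduce to the single designated amplitude.

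The Gordan's-lemma framing itself is fine and matches the paper, but to close the argument you must refute Statement~2 for the \emph{fixed-$i$} system, i.e.\ show that no nonzero $\lambda_j\ge 0$ satisfies $\sum_{j\ne i}\lambda_j(\vec{n}_j-\vec{n}_i)=\vec{0}$. The paper does this componentwise: writing $\sum_{j\ne i}\lambda_j\vec{n}_j=S\vec{n}_i$ with $S=\sum_j\lambda_j>0$ forces $(\vec{n}_j)_p=(\vec{n}_i)_p$ for every $p$ and every $j$ with $\lambda_j>0$, contradicting $j\ne i$. Alternatively, if you prefer an explicit witness in the spirit of your $\vec{u}_\star$, the correct choice is $\vec{u}_i=\vec{e}-2\vec{n}_i=-\vec{z}_i$, for which
\begin{equation}
\vec{u}_i^T(\vec{n}_j-\vec{n}_i)=\sum_{p}\left[(1-2(\vec{n}_i)_p)\left((\vec{n}_j)_p-(\vec{n}_i)_p\right)\right]
\end{equation}
equals the Hamming distance between $\vec{n}_i$ and $\vec{n}_j$, hence is at least $1$ for all $j\ne i$; this both exhibits the desired $\vec{u}$ directly and, by pairing against any hypothetical Gordan certificate, refutes Statement~2. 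Either repair makes the lemma usable downstream; as written, your proof establishes only the $j>i$ half of what Theorem~\ref{thm:nnp} requires.
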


\begin{proof}
If no such $\vec{u}$ exists, then by Gordan's lemma, case 2 holds, that is,
there exist nonnegative coefficients $\lambda_j \ge 0$ ($j\ne i$, not all zero) such that
\begin{eqnarray}
\sum_{j\ne i} \lambda_j(\vec{n}_j - \vec{n}_i)=\vec{0}. \label{eq:lambdaj1}
\end{eqnarray}
We now show that this will lead to contradictions.

Let $S=\sum_{j\ne i}\lambda_j$. Since not all $\lambda_j$ are zero, we have $S>0$.
Rearranging Eq. \eqref{eq:lambdaj1}, we obtain
\begin{eqnarray}
\sum_{j\ne i}\lambda_j \vec{n}_j = S\vec{n}_i.    
\end{eqnarray}
Consider the $p$-th component of this equation, if $(\vec{n}_i)_p=1$, then $\sum_{j\ne i}\lambda_j (\vec{n}_j)_p = S$. Since
$\sum_{j\ne i}\lambda_j (\vec{n}_j)_p \le 
\sum_{j\ne i}\lambda_j = S$,  we have $(\vec{n}_j)_p=1$ for all $j$ with $\lambda_j>0$.
Likewise, if $(\vec{n}_i)_p=0$, then $\sum_{j\ne i}\lambda_j (\vec{n}_j)_p = 0$,
suggesting that for all $j$ with $\lambda_j>0$, $(\vec{n}_j)_p=0$.
In sum, for every $j$ with $\lambda_j>0$, Eq. \eqref{eq:lambdaj1} implies that $\vec{n}_j=\vec{n}_i$, which contradicts the fact that $j\ne i$.
Therefore, our assumption that no such $\vec{u}$ exists must be false.
\end{proof}

Remarks: The set $V$ is a finite set of nonzero vectors, and it does not contain any pair of opposite vectors. Because if $\vec{v}$ and $-\vec{v}$ were
both in $V$, then there exists $j$ and $k$, such that $\vec{n}_j-\vec{n}_i
=-\vec{n}_k + \vec{n}_i$, which implies that $\vec{n}_j+\vec{n}_k = 2\vec{n}_i$.
This is impossible for binary vectors $\vec{n}_j\ne \vec{n}_i$ and $\vec{n}_k\ne \vec{n}_i$. Geometrically, since $V$ is finite and contains no opposite vectors, the convex cone generated by $V$ does not contain the origin. Therefore, there exists a hyperplane through the origin that separates the origin from $V$, meaning all vectors in $V$ are on one side of the hyperplane. The normal vector to this hyperplane (pointing towards $V$) can be taken as $\vec{u}$, so that $\vec{u}^T\vec{v}>0$ for all $\vec{v}\in V$. 

Based on Lemma \ref{thm:vset}, we can prove Theorem \ref{thm:nnp}.
\begin{proof}
Let $N_h=D_K$. The key idea is to construct $N_h$ functions
$\{g_i\}$, where $g_i(\vec{n})=\phi(\vec{w}_i^T\vec{n}+b_i)$
with appropriately chosen parameters,
such that for each $i$: (1) $g_i(\vec{n}_i)\rightarrow\Psi(\vec{n}_i)$
and (2) $g_i(\vec{n}_j)\rightarrow 1$ for $j\ne i$.

The first part of the above statement is simple to prove. Since $\Psi(\vec{n}_i) \in [-1,1]$ and $\phi$ is continuous with range $(-1,1)$, there exists $x_0\in\mathbb{R}$ such that $\phi(x_0)$ is arbitrarily close to $\Psi(\vec{n}_i)$.

For the second part, we consider the set of vectors \eqref{eq:Vset}.
By Lemma \eqref{thm:vset}, there exists a vector $\vec{u}_i\in \mathbb{R}^K$ such that
$\vec{u}^T_i\vec{v}>0,\,\forall \vec{v}\in V$. Then, define 
\begin{eqnarray}
    \vec{w}_i = \theta \vec{u}_i,\quad b_i = x_0 - \vec{w}_i^T \vec{n}_i,
\end{eqnarray}
where $\theta>0$ is a scaling factor to be chosen large. We have
$\vec{w}_i^T \vec{n}_i+b_i = x_0$ and for $j\ne i$,
\begin{eqnarray}
    \vec{w}_i^T \vec{n}_j+b_i = x_0 + \theta \vec{u}^T_i(\vec{n}_j - \vec{n}_i).
\end{eqnarray}
Since $\vec{u}^T_i(\vec{n}_j - \vec{n}_i)>0$, by taking $\theta$ sufficiently large,
we can make $\vec{w}_i^T \vec{n}_j+b_i$ large enough, such that
$\phi(\vec{w}_i^T \vec{n}_j+b_i)$ is arbitrarily close to 1.

Then, by multiplying $N_h$ such functions $g_i(\vec{n})$ in Eq. \eqref{eq:nps}, the NPS ansatz can approximate any wavefunction arbitrarily well.
\end{proof}

The above proof relies on the condition \eqref{eq:nps-condition},
and hence does not generalize to the activation function $f(x)=\cos(x)$ used in Ref. \cite{wu2025hybrid}. In Appendix, we give a more general proof for Theorem \ref{thm:NPSgeneral}, which also reveals a deeper
connection with CPS.

\section{Conclusion}
In this work, we generalize RBM to a more broad class of 
states and prove the universal approximation capability 
of NPS given suitable active function. 
This lays the foundation for future exploration of NPS for strongly correlated fermions.
While in our previous work\cite{wu2025hybrid} we used such functions
with $\phi(x)=\cos(x)$ as correlators to enhance the expressivity of other variational ans\"atze, the present work demonstrate that the NPS itself is also a valid variational ansatz, which can be optimized using variational Monte Carlo\cite{mcmillan1965ground,gubernatis2016quantum,becca2017quantum}.
We expect the choice of activation function $\phi$ will be important
for NPS in practice, and the combination with composition of functions in
deep learning\cite{lecun2015deep} for more efficient representation 
is also promising.
These open questions will be explored in future,
which can extend RBM to more challenging systems.

\section*{Acknowledgment}
This work was supported by the Quantum Science and Technology-National Science and Technology Major Project(2023ZD0300200) and the Fundamental Research Funds for the Central Universities.

\section*{Appendix: Proof of Theorem \ref{thm:NPSgeneral} for NPS with general activation functions}\label{sec:NPSgeneral}
\subsection{Representation of $\Psi$ by multilinear polynomials}
We introduce a special representation of the many-body wavefunction by interpreting $\Psi(\vec{n})$ as a pseudo-Boolean function\cite{crama2011boolean}, which defines a mapping from the Boolean hypercube $\mathcal{B}^K$ to a field $F$ ($\mathbb{R}$ or $\mathbb{C}$). For convenience, we will work with the variable $z_k=-2n_k+1\in\{1,-1\}$. Using the inversion $n_k=\frac{1-z_k}{2}$, the same wavefunction but expressed in $\{z_k\}$ can be found as
\begin{eqnarray}
\Phi(z_1,\cdots,z_K)=
\Psi(\frac{1-z_1}{2},\cdots,\frac{1-z_k}{2}).\label{eq:z2n}
\end{eqnarray}
It can be represented uniquely as
a multilinear polynomial\cite{crama2011boolean}
\begin{eqnarray}
\Phi(z_1,\cdots,z_K) = \sum_{x_1\cdots x_K} \hat{\Phi}^{x_1\cdots x_K}z_1^{x_1}\cdots z_K^{x_K},\label{eq:multilinear}
\end{eqnarray}
where $x_k\in\{0,1\}$, $\hat{\Phi}^{x_1\cdots x_K}$ is a tensor with $2^K$ elements, which can
be called as the "Fourier coefficients" of the function $\Phi$.
Eq. \eqref{eq:multilinear} can be proved can realizing that
each factor 
$z_k^{x_k}=\left\{\begin{array}{cc}
   1,  &  z_k = 1 \\
  (-1)^{x_k}, &  z_k = -1
\end{array}\right.$ can be viewed as an element of
the Hadamard matrix 
$H_k=\frac{1}{\sqrt{2}}\left[\begin{array}{cc}
   1  & 1 \\
   1  & -1
\end{array}\right]$ (up to a constant factor). 
Then, we can rewrite 
Eq. \eqref{eq:multilinear} in a compact form
\begin{eqnarray}
\Phi = 2^{K/2} H^{\otimes K} \hat{\Phi},\label{eq:wht}
\end{eqnarray}
where both $\Phi$ and $\hat{\Phi}$ are viewed as vectors
and $H^{\otimes K}=H_1\otimes \cdots\otimes H_K$
is the Walsh-Hadamard transform. By the involutory property $H_k^2=I$, 
Eq. \eqref{eq:wht} can be inverted as
\begin{eqnarray}
\hat{\Phi} = 2^{-K/2} H^{\otimes K} \Phi,\label{eq:inversion}
\end{eqnarray}
which shows that $\hat{\Phi}$ is uniquely determined by $\Phi$.
Eq. \eqref{eq:inversion} can be written explicitly as
\begin{eqnarray}
\hat{\Phi}^{x_1\cdots x_K} = 2^{-K}\sum_{z_1\cdots z_K}z_1^{x_1}\cdots z_K^{x_K}
\Phi(z_1,\cdots,z_K).
\end{eqnarray}
In the terminology of quantum computing\cite{nielsen2010}, $\hat{\Phi}$ is just the
wavefunction in the eigenbasis $\{|+\rangle,|-\rangle\}$ of
the Pauli $X$ operator for each qubit, while $\Phi$ is the
wavefunction in the computational basis $\{|0\rangle,|1\rangle\}$.

\subsection{Representation of functions $f(b+\vec{w}^T\vec{z})$}\label{sec:phi}
The above result shows that any discrete function: $\Psi:\{+1,-1\}^{n}\rightarrow F$
with the field $F=\mathbb{R}$ or $\mathbb{C}$ can be represented by a multilinear polynomial. Apply this result to the function of the form $f(b+\vec{w}^T\vec{z})$, where $f$
can be $\phi$ in Eq. \eqref{eq:nps} or $\kappa=\ln\phi$ in Eq. \eqref{eq:nps2}, leads to
\begin{eqnarray}
f(b+\vec{w}^T\vec{z}) = 2^{K/2} H^{\otimes K} \hat{f}.  
\end{eqnarray}
The inversion \eqref{eq:inversion} gives
\begin{eqnarray}
\hat{f}^{x_1 x_2\cdots x_K} = 2^{-K}
\sum_{z_1 z_2 \cdots z_K} 
z_1^{x_1} z_2^{x_2} \cdots z_K^{x_K}
f(b+\vec{w}^T \vec{z}).\label{eq:fkappa}
\end{eqnarray}
To better illustrate this formula, we give two concrete examples for
$K=1$ and $K=2$. For $K=1$,
\begin{eqnarray}
\hat{f}^0 &=& \frac{1}{2}[f(b+w_1)+f(b+w_1)], \\
\hat{f}^1 &=& \frac{1}{2}[f(b+w_1)-f(b-w_1)],
\end{eqnarray}
and for $K=2$,
\begin{widetext}
\begin{eqnarray}
\hat{f}^{00} &=& \frac{1}{4}[f(b+w_1+w_2)+f(b-w_1+w_2)+f(b+w_1-w_2)+f(b-w_1-w_2)], \\
\hat{f}^{10} &=& \frac{1}{4}[f(b+w_1+w_2)-f(b-w_1+w_2)+f(b+w_1-w_2)-f(b-w_1-w_2)], \\
\hat{f}^{01} &=& \frac{1}{4}[f(b+w_1+w_2)+f(b-w_1+w_2)-f(b+w_1-w_2)-f(b-w_1-w_2)], \\
\hat{f}^{11} &=& \frac{1}{4}[f(b+w_1+w_2)-f(b-w_1+w_2)-f(b+w_1-w_2)+f(b-w_1-w_2)].
\end{eqnarray}
\end{widetext}

In the later part, we will use the asymptotic behavior of $\hat{f}$ for small $\vec{\omega}$. To analyze this, we first use the Taylor expansion
\begin{eqnarray}
f(b+\vec{w}^T \vec{z}) = \sum_{m_k \ge 0}
\frac{1}{m_1!\cdots m_K!}f^{\sum_k m_k}(b) \prod_{k} \omega_k^{m_k} z_k^{m_k},\label{eq:taylor}
\end{eqnarray}
where $f^{\sum_k m_k}(b)$ represents the $\sum_k m_k$-th order derivative of $f(x)$
taken at $x=b$. It deserves to point that although Eq. \eqref{eq:taylor} looks quite similar
to Eq. \eqref{eq:multilinear}, they are different.
To be more precise, we can find $\hat{f}$ as
\begin{widetext}
\begin{eqnarray}
\hat{f}^{x_1 x_2\cdots x_K} &=& \frac{1}{2^K}
\sum_{z_1z_2\cdots z_K} 
z_1^{x_1}\cdots z_K^{x_K}
\sum_{m_k \ge 0}
\frac{1}{m_1!\cdots m_K!}f^{\sum_k m_k}(b) \prod_{k} \omega_k^{m_k} z_k^{m_k} \nonumber\\
&=&
\frac{1}{2^K}
\sum_{m_k \ge 0}
\frac{1}{m_1!\cdots m_K!}f^{\sum_k m_k}(b) \prod_{k} \omega_k^{m_k} 
\sum_{z_k} 
z_k^{x_k+m_k} \nonumber\\
&=&
\frac{1}{2^K}
\sum_{m_k \ge 0}
\frac{1}{m_1!\cdots m_K!}f^{\sum_k m_k}(b) \prod_{k} \omega_k^{m_k} 
2\delta_{x_k+m_k,e} \nonumber\\
&=&
\sum_{m_k \ge 0}
\frac{1}{(x_1+2m_1)!\cdots (x_K+2m_K)!}
f^{\sum_k x_k+\sum_k 2m_k}(b) \prod_{k} \omega_k^{x_k+2m_k},\label{eq:fphi}
\end{eqnarray}
\end{widetext}
by using $\sum_{z}z^n = 1 + (-1)^n = 2\delta_{n,e}$, where we introduce
a shorthand notation $\delta_{n,e}$ to represent that the value is 1 if 
and only if $n$ is an even integer. Introducing a parameter $\epsilon$ for each $\omega_k$ to count the order, the asymptotic behavior of $\hat{f}$ for small $\epsilon$ goes as
\begin{eqnarray}
\hat{f}^{x_1 x_2\cdots x_K} =
f^{\sum_k x_k}(b) \epsilon^{\sum_k x_k}
\prod_k \omega_k^{x_k} + O(\epsilon^{\sum_k x_k+2}).\label{eq:asymp}
\end{eqnarray}

Using Eq. \eqref{eq:fphi}, we can also find the parity
of $\hat{f}^{n_1n_2\cdots n_K}$ with respect to the sign change of $\vec{\omega}$.
If one of its component $\omega_k$ changes sign, with the new vector denoted by $\vec{\omega}'$, the corresponding
$f(b+\vec{\omega}^{\prime T}\vec{z})$ has an expansion with coefficients
$(-1)^{x_k}\hat{f}^{x_1 x_2\cdots x_K}$.

\subsection{Introduction of intermediate functions
$\tilde{\Psi}(\vec{n})$ and $\tilde{\Psi}_+(\vec{n})$}\label{sec:uat}
While Eq. \eqref{eq:z2n} adopts distinct notations ($\Psi$ and $\Phi$) to emphasize the mathematical difference between functions with different types of arguments ($\vec{n}$ or $\vec{z}$), in subsequent sections we will use the same symbol for the wavefunction for the sake of brevity. Whether the representation $\Phi(\vec{z})$ or $\Psi(\vec{n})$ is intended will be clear from the context.

The wavefunction $\Psi(\vec{n})$ can contain zeros and negative values. We first show that we can construct an intermediate state $\tilde{\Psi}$, which is arbitrarily close to $\Psi$ but with nonzero entries. Suppose the number of
zeros in $\Psi$ is $M_0$, if $M_0=0$, then we simply choose $\tilde{\Psi}\triangleq \Psi$. Otherwise, if $M_0 \ge 1$, we  define
\begin{eqnarray}
\tilde{\Psi}(\vec{n})\triangleq \left\{\begin{array}{cc}
  \epsilon/\sqrt{M_0}, & \Psi(\vec{n}) = 0  \\
  \Psi(\vec{n})\sqrt{1-\epsilon^2},  &  \Psi(\vec{n})\ne 0
\end{array}\right.,   
\end{eqnarray}
which is normalized $\langle\tilde{\Psi}|\tilde{\Psi}\rangle=1$ and satisfies
\begin{eqnarray}
|\Psi(\vec{n})-\tilde{\Psi}(\vec{n})|=\left\{\begin{array}{cc}
  \epsilon/\sqrt{M_0}, & \Psi(\vec{n}) = 0  \\
  |\Psi(\vec{n})|(1-\sqrt{1-\epsilon^2}),  &  \Psi(\vec{n})\ne 0
\end{array}\right..    
\end{eqnarray}
In both case, we have $|\Psi(\vec{n})-\tilde{\Psi}(\vec{n})|<\epsilon$ using
$|\Psi(\vec{n})|\le 1$ and $1-\sqrt{1-\epsilon^2}<\epsilon$ for $\epsilon<1$.

Next, we show the following theorem holds.
\begin{theorem}
$\tilde{\Psi}(\vec{n})$ can be written as
\begin{eqnarray}
\tilde{\Psi}(\vec{n}) = s(\vec{n})\tilde{\Psi}_+(\vec{n}),\quad
s(\vec{n}) = \prod_{\alpha=1}^{M_-}
\phi(b_\alpha + \vec{w}_\alpha^T\vec{n}) 
\end{eqnarray}
where $s(\vec{n})$ has the same sign structure as $\tilde{\Psi}(\vec{n})$, and hence $\tilde{\Psi}_+(\vec{n})$ is a function with all
positive values.
\end{theorem}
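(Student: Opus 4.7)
My plan is to take $M_-$ equal to the number of configurations at which $\tilde{\Psi}$ is negative, index them as $\vec{n}_{\alpha}$ for $\alpha=1,\ldots,M_-$, and assign to each $\alpha$ a single neuron $\phi_\alpha(\vec{n})=\phi(b_\alpha+\vec{w}_\alpha^T\vec{n})$ that is strictly negative at $\vec{n}_\alpha$ and strictly positive at every other vertex of the hypercube. Taking $s(\vec{n})=\prod_{\alpha=1}^{M_-}\phi_\alpha(\vec{n})$ would then reproduce the sign structure of $\tilde{\Psi}$ pointwise---exactly one factor is negative at each $\vec{n}_\alpha$ and none at the remaining vertices---so that $\tilde{\Psi}_+\triangleq\tilde{\Psi}/s$ becomes well defined and positive everywhere, given that $\tilde{\Psi}$ has no zeros by construction.

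To build each $\phi_\alpha$, I would first exploit the fact that $\phi$ is analytic and, by Condition~1 of Theorem~\ref{thm:NPSgeneral}, assumes both signs. A standard intermediate-value argument then yields a sign-crossing point $x^*\in\mathbb{R}$ and a radius $\eta>0$ with $\phi<0$ on $(x^*-\eta,x^*)$ and $\phi>0$ on $(x^*,x^*+\eta)$ (the reverse orientation being absorbed by flipping the sign of $\vec{w}_\alpha$). Lemma~\ref{thm:vset} then supplies $\vec{u}_\alpha\in\mathbb{R}^K$ with $\vec{u}_\alpha^T(\vec{n}_j-\vec{n}_\alpha)>0$ for every $j\ne\alpha$, and I would parameterize $\vec{w}_\alpha=\lambda\vec{u}_\alpha$ and fix $b_\alpha$ so that $b_\alpha+\vec{w}_\alpha^T\vec{n}_\alpha=x^*-\epsilon$ for two scalars $\lambda>0$ and $\epsilon\in(0,\eta)$ to be tuned. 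This immediately yields $\phi_\alpha(\vec{n}_\alpha)=\phi(x^*-\epsilon)<0$.

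The key step is then to check that the $D_K-1$ remaining arguments $x^*-\epsilon+\lambda\vec{u}_\alpha^T(\vec{n}_j-\vec{n}_\alpha)$ can be simultaneously packed into the positive window $(x^*,x^*+\eta)$. Writing $\delta_\alpha$ and $\Delta_\alpha$ for the smallest and largest inner products $\vec{u}_\alpha^T(\vec{n}_j-\vec{n}_\alpha)$ over $j\ne\alpha$, this reduces to the two-sided inequality $\epsilon/\delta_\alpha<\lambda<(\eta+\epsilon)/\Delta_\alpha$, which is satisfied whenever $\epsilon$ is taken sufficiently small compared to $\eta$ and to $\eta\delta_\alpha/(\Delta_\alpha-\delta_\alpha)$. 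Any admissible pair $(\lambda,\epsilon)$ then gives $\phi_\alpha(\vec{n}_j)>0$ for all $j\ne\alpha$, completing the construction of $\phi_\alpha$ and hence of $s$.

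I expect the main obstacle to be exactly this simultaneous-fit step: a single neuron has only $K+1$ free parameters but must control signs at $D_K=2^K$ vertices. Lemma~\ref{thm:vset} lets us separate the target $\vec{n}_\alpha$ from the others by a hyperplane, yet the projections of those others along $\vec{u}_\alpha$ carry an unavoidable spread $\Delta_\alpha/\delta_\alpha$ that can be arbitrarily large. The essential insight is to anchor the argument at a genuine \emph{sign-crossing} of $\phi$ rather than at arbitrary opposing points $x_1,x_2$, because that allows the positive and negative target regions to be brought arbitrarily close in argument space; shrinking $\epsilon$ then keeps all $D_K-1$ spread-out projections inside the single positive window $(x^*,x^*+\eta)$ no matter how large the ratio $\Delta_\alpha/\delta_\alpha$ is.
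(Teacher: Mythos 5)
Your proposal is correct and follows essentially the same route as the paper: one neuron per negative amplitude, anchored at a genuine one-sided sign crossing of $\phi$, with a separating hyperplane scaled so that the target vertex lands in the negative window and all other vertices in the positive window. The only cosmetic difference is that you reuse Lemma~\ref{thm:vset} (so your separation is one-sided and you need the two-sided constraint on $\lambda$), whereas the paper invokes the hyperplane separation theorem with an offset $c$ and simply shrinks the scale $\theta$; both work.
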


\begin{proof}
Let the index set $I_- =\{ i : \tilde{\Psi}(\vec{n}_i)<0\}$
contains all the indices where $\tilde{\Psi}(\vec{n}_i)$ is negative and $M_-$ be the number of negative values in $\tilde{\Psi}(\vec{n})$, that is, $M_-=|I_-|$. 
We show that for each $i\in I_-$,
we can construct a factor $\phi(b_\alpha + \vec{w}_\alpha^T\vec{n})$ such that
$\phi(b_\alpha + \vec{w}_\alpha^T\vec{n}_i)<0$
and 
$\phi(b_\alpha + \vec{w}_\alpha^T\vec{n}_j)>0$
for $j\ne i$.
Without loss of generality, we can assume $\phi(x) > 0$ for $x \in (x_0 - \epsilon, x_0)$ and $\phi(x) < 0$ for $x \in (x_0, x_0 + \epsilon)$ based on the first condition in Theorem \ref{thm:NPSgeneral}. The case with reversed signs follows similarly.

By the hyperplane separation theorem\cite{boyd2004convex} applied to the hypercube vertices $\{\vec{n}_j\}_{j=0}^{D_K-1}$, there exists $\vec{u} \in \mathbb{R}^K$ and $c \in \mathbb{R}$ such that:
\begin{eqnarray}
\vec{u}^T \vec{n}_i + c & > & 0, \\
\vec{u}^T \vec{n}_j + c & < & 0, \quad \text{for all } j \neq i.
\end{eqnarray}
Define the affine function
\begin{eqnarray}
L(\vec{n}) & = & \theta(\vec{u}^T \vec{n} + c) + x_0,
\end{eqnarray}
where $\theta > 0$ is a scaling parameter to be determined. Let
\begin{eqnarray}
M & = & \vec{u}^T \vec{n}_i + c > 0, \\
m & = & \max_{j \neq i} |\vec{u}^T \vec{n}_j + c| > 0.
\end{eqnarray}
Choose $\theta$ such that:
\begin{eqnarray}
\theta < \frac{\epsilon}{\max(M, m)},
\end{eqnarray}
then for all configurations,
\begin{eqnarray}
|L(\vec{n}_j) - x_0| & = & \theta|\vec{u}^T \vec{n}_j + c| < \epsilon \quad \text{for all } j,
\end{eqnarray}
that is,
\begin{eqnarray}
L(\vec{n}_i) & \in & (x_0, x_0 + \epsilon), \\
L(\vec{n}_j) & \in & (x_0 - \epsilon, x_0), \quad \text{for all } j \neq i.
\end{eqnarray}
By the sign properties of $\phi$ around $x_0$, we obtain
\begin{eqnarray}
\phi(L(\vec{n}_i)) & < & 0, \\
\phi(L(\vec{n}_j)) & > & 0 \quad \text{for all } j \neq i.
\end{eqnarray}
Setting $\vec{w} = \theta\vec{u}$ and $b = \theta c + x_0$ completes the proof.
\end{proof}

This proof demonstrates that we can construct individual factors that control the sign pattern of NPS, which is a crucial step in establishing the universal approximation capability of NPS.
In the following, we prove that $\tilde{\Psi}_+(\vec{n})$
can also be approximate arbitrarily well using NPS with only positive factors.

\subsection{Recursive approximation of $\tilde{\Psi}_+(\vec{n})$ by NPS}
To approximate $\tilde{\Psi}_+(\vec{n})$ using NPS, the second condition in Theorem \ref{thm:NPSgeneral} is important.
Its necessity can be seen easily as follows:
If $\kappa(x)=\ln\phi(x)$
is a polynomial with degree less than $K$, which implies that
$\phi(x)=e^{P_n(x)}$ with $P_n(x)=\sum_{i=0}^{n}p_i x^i$ being a polynomial in $x$ with
degree $n<K$, then the product of two neurons becomes
\begin{eqnarray}
\phi(b_1+\vec{w}_1^T\vec{z})\phi(b_2+\vec{w}_2^T\vec{z}) 
&=&
e^{P_n(b_1+\vec{w}_1^T\vec{z})+P_n(b_2+\vec{w}_2^T\vec{z})} \nonumber\\
&=& e^{Q_n(z_1,z_2,\cdots,z_K)},
\end{eqnarray}
where $Q_n(z_1,z_2,\cdots,z_K)$ is a multilinear polynomial in $\{z_k\}$
with degree $n<K$. This reveals that the product of $\phi$
does not have the representational power to approximate
terms with degree higher than $n$, e.g., $e^{g z_1z_2\cdots z_K}$.

The sufficiency can be proved as follows. Specifically, we will show that if $\kappa(x)=\ln\phi(x)$ is not a polynomial in $x$ with degree less than $K$, then 
there exists a set of parameters $\{N_i,b_i,\vec{w}_i\}_{i=1}^{D_K-1}$ 
such that $\tilde{\Psi}(\vec{z})$ can be approximated arbitrarily well by NPS, viz.,
\begin{eqnarray}
|\tilde{\Psi}_+(\vec{z}) - \Theta(\vec{z})|<\epsilon,\label{eq:nps+}
\end{eqnarray}
with
\begin{eqnarray}
\Theta(\vec{z})
=
\mathcal{N}\prod_{\alpha=1}^{D_K-1} \phi^{N_\alpha}(b_\alpha+\vec{w}_\alpha^T \vec{z}),
\label{eq:phiprod}
\end{eqnarray}
for any given $\epsilon$, where $\mathcal{N}>0$ is a normalization constant and $\phi(b_\alpha+\vec{w}_\alpha^T \vec{z})>0$.
In the following, we show that 
\begin{eqnarray}
|\ln\tilde{\Psi}_+(\vec{z})-\ln\Theta(\vec{z})|<\epsilon,\label{eq:matchln}
\end{eqnarray}
by matching the multilinear coefficient of $\ln\tilde{\Psi}_+(\vec{z})$ using neurons in 
$\ln\Theta(\vec{z})$ in a recursive way. Then, by the continuity of $e^x$, Eq. \eqref{eq:nps+} holds.

To this end, we order the power set of $I=\{1,2,\cdots,K\}$ into $K+1$ tiers
\begin{eqnarray}
\left\{\begin{array}{lll}
\mathrm{tier}\;K &:&\; \{1,\cdots,K\} \nonumber\\
\mathrm{tier}\;K-1 &:&\; \{1,\cdots,K-1\}, \cdots, \{2,\cdots,K\} \nonumber\\
 &&\quad\cdots \nonumber\\
\mathrm{tier}\;2 &:&\; \{1,2\},\cdots,\{K-1,K\} \nonumber\\
\mathrm{tier}\;1 &:&\; \{1\}, \{2\}, \cdots, \{K\} \nonumber\\
\mathrm{tier}\;0 &:&\; \varnothing.
\end{array}\right.\label{eq:tier}
\end{eqnarray}
For brevity, we define the symbol $I_{t,k}$ to represent the $k$-th subset at the tier $t$. Let $g(\vec{z})=\ln\tilde{\Psi}_+(\vec{z})$, then using the multilinear expansion
\eqref{eq:multilinear} for $g(\vec{z})$ gives
\begin{eqnarray}
g(\vec{z}) = \hat{g}_{I_{0}}
+
\sum_{k=1}^{C_K^{1}} \hat{g}_{I_{1,k}}\mathcal{Z}_{I_{1,k}}
+
\cdots
+
\hat{g}_{I_{K}}\mathcal{Z}_{I_{K}},\label{eq:gexpansion}
\end{eqnarray}
or equivalently
\begin{eqnarray}
\tilde{\Psi}_+(\vec{z}) = \exp\left(
\hat{g}_{I_{0}}
+
\sum_{k=1}^{C_K^{1}} \hat{g}_{I_{1,k}}\mathcal{Z}_{I_{1,k}}
+
\cdots
+
\hat{g}_{I_{K}}\mathcal{Z}_{I_{K}}
\right),
\end{eqnarray}
where we have introduced an abbreviation $\mathcal{Z}_{I_{t,k}}$
for the factor $z_{p_1}z_{p_2}\cdots z_{p_t}$ with $p_1<p_2<\cdots <p_t$ and $p_i\in I_{t,k}$,
and $\hat{g}_{I_{t,k}}$ to represent the Fourier
coefficient $\hat{g}^{x_1,\cdots,x_K}$ with $x_k=1$
for $k\in I_{t,k}$.

Eq. \eqref{eq:phiprod} gives
\begin{eqnarray}
\ln\Theta(\vec{z}) 
&=& 
\sum_{\alpha=1}^{D_K-1} N_\alpha \ln\phi(b_\alpha+\vec{w}_\alpha^T\vec{z}) + \ln\mathcal{N} \nonumber\\
&=&
\sum_{\alpha=1}^{D_K-1} N_\alpha\kappa(b_\alpha+\vec{w}_\alpha^T\vec{z}) + \ln\mathcal{N}.
\end{eqnarray}
The basic idea for proving Eq. \eqref{eq:matchln}
is to use one term in this expansion to match
the term with the highest power of multilinear polynomials, such as
Eq. \eqref{eq:gexpansion}, at each time in a recursive way starting from the highest tier in Eq. \eqref{eq:tier}. Specifically, we
use the following form
\begin{eqnarray}
\ln\Theta(\vec{z}) = \sum_{I_{t,k},t>0} N_{I_{n,k}}\kappa(b_{I_{n,k}}+\vec{w}_{I_{n,k}}^T\vec{z}_{I_{n,k}})
+ \ln\mathcal{N},\label{eq:phin2}
\end{eqnarray}
where the sum is replaced by the sum over subset from tier $K$ to tier 1,
and $\vec{z}_{I_{t,k}}$ represents the vector consisted of 
$z_{p_i}$ with $p_i\in I_{t,k}$. This structure is illustrated
in a schematic way in Fig. \ref{fig:Phi}. We will
show how to choose $\{N_{I_{t,k}},b_{I_{t,k}},\omega_{I_{t,k}}\}$
such that Eq. \eqref{eq:matchln} holds.

\begin{figure}
    \centering
    \includegraphics[width=0.45\textwidth]{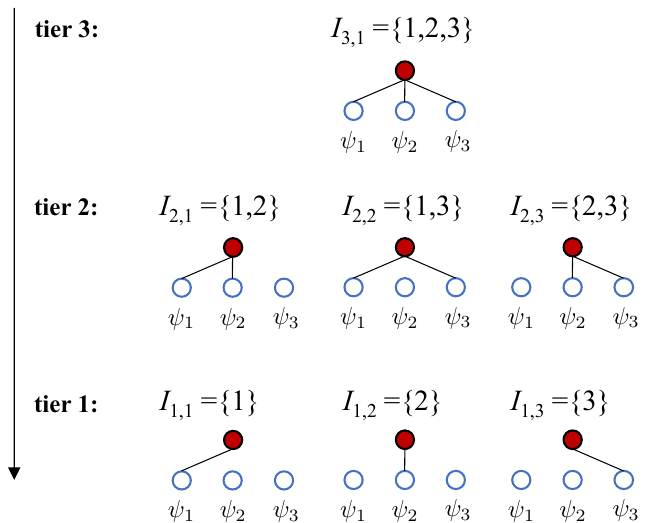}
    \caption{Hierarchical structure of neurons $\kappa(b_{I_{t,k}}+
    \vec{\omega}^T_{I_{t,k}}\vec{z}_{I_{t,k}})$
    used to eliminate the leading terms in $\mathcal{Z}_{I_{t,k}}$ recursively.
    Specifically, we can match the coefficients in Eq. \eqref{eq:gexpansion}
    from the highest tier to the lowest tier recursively using neurons with supports
    shown in the figure.
    }
    \label{fig:Phi}
\end{figure}

We start by giving a recipe to approximate the higher degree term. We rewrite Eq. \eqref{eq:gexpansion} as
\begin{eqnarray}
g(\vec{z}) \triangleq g^{(0)}_{K}(z_{I_K}) = g^{(0)}_{K-1}(z_{I_{K}}) + \hat{g}^{(0)}_{I_{K}}\mathcal{Z}_{I_{K}},\label{eq:ghighest}
\end{eqnarray}
where $g^{(0)}_{K-1}(z_{I_{K}})$ collects the remaining 
$2^{K}-1$ terms is the multilinear polynomial 
in variables $z_{I_K}=(z_1,\cdots,z_K)$, with the subscript 
$K-1$ indicating the highest degree. 
Now we choose one term in
Eq. \eqref{eq:phin2}, and rewritten it as
\begin{eqnarray}
\kappa(b_{I_K}+\vec{\omega}_{I_K}^T\vec{z}_{I_K})
=\kappa_{K-1}(z_{I_K}) + \hat{\kappa}_{I_K} \mathcal{Z}_{I_K},
\end{eqnarray}
where $\kappa_{K-1}(z_{I_K})$ and $\hat{\kappa}_{I_K}$ are both functions of $(b_{I_K},\vec{\omega}_{I_K})$, but for brevity
we have omitted their function dependence. Our goal is to use it to match the term with the highest power $\hat{g}^{(0)}_{I_{K}}\mathcal{Z}_{I_{K}}$
in Eq. \eqref{eq:ghighest} via
\begin{eqnarray}
&&
g^{(0)}_{K}(z_{I_K}) - N_{I_{K}}\kappa(b_{I_{K}}+\vec{w}^T_{I_{K}}\vec{z}_{I_{K}}) \nonumber\\
&=&
[g^{(0)}_{K-1}(z_{I_{K}})-N_{I_{K}}\kappa_{K-1}(z_{I_{K}})]
+ (\hat{g}^{(0)}_{I_{K}} - N_{I_{K}} \hat{\kappa}_{I_{K}})\mathcal{Z}_{I_{K}} \nonumber\\
&\triangleq&
g_{K-1}^{(1)}(z_{I_{K}}) + (\hat{g}^{(0)}_{I_{K}} - N_{I_{K}} \hat{\kappa}_{I_{K}})\mathcal{Z}_{I_{K}},
\end{eqnarray}    
where $g_{K-1}^{(1)}(z_{I_{K}})$ is a new polynomial with highest power
equal to $K-1$. In a similar way, we can use $C_{K}^{K-1}$ terms
in Eq. \eqref{eq:phin2} to match the terms with power equal to $K-1$ in
$g_{K-1}^{(1)}(z_{I_{K}})$, viz.,
{\small\begin{align}
&
g_{K-1}^{(1)}(z_{I_{K}}) - \sum_{k=1}^{C_K^{K-1}} N_{I_{K-1,k}}\kappa(b_{I_{K-1,k}}+\vec{w}^T_{I_{K-1,k}}\vec{z}_{I_{K-1,k}}) \nonumber\\
&=
[g_{K-2}^{(1)}(z_{I_K})
-\sum_{k=1}^{C_K^{K-1}}N_{I_{K-1,k}}\kappa_{K-2}(z_{I_{K-1,k}})] \nonumber\\
&+
\sum_{k=1}^{C_K^{K-1}}(\hat{g}^{(1)}_{I_{K-1,k}} - N_{I_{K-1,k}} \hat{\kappa}_{I_{K-1,k}})\mathcal{Z}_{I_{K-1,k}}, \nonumber\\
&\triangleq
g_{K-2}^{(2)}(z_{I_K})
+
\sum_{k=1}^{C_K^{K-1}}(\hat{g}^{(1)}_{I_{K-1,k}} - N_{I_{K-1,k}} \hat{\kappa}_{I_{K-1,k}})\mathcal{Z}_{I_{K-1,k}}, 
\end{align}}
This procedure can be carried out recursively down to tier 1, such that
{\small\begin{eqnarray}
&&
g_{1}^{(K-1)}(z_{I_K}) - \sum_{k=1}^{C_K^{1}} N_{I_{1,k}}\kappa(b_{I_{1,k}}+\vec{w}^T_{I_{1,k}}\vec{z}_{I_{1,k}}) \nonumber\\
&=&
[g_{0}^{(K-1)}(z_{I_K})
-\sum_{k=1}^{C_K^1}N_{I_{1,k}}\kappa_{0}(z_{I_{1,k}})] \nonumber\\
&&+
\sum_{k=1}^{C_K^1}(\hat{g}^{(K-1)}_{I_{1,k}} - N_{I_{1,k}} \hat{\kappa}_{I_{1,k}})\mathcal{Z}_{I_{1,k}}, \nonumber\\
&=&
g_{0}^{(K)}(z_{I_K})
+
\sum_{k=1}^{C_K^1}(\hat{g}^{(K-1)}_{I_{1,k}} - N_{I_{2,k}} \hat{\kappa}_{I_{2,k}})\mathcal{Z}_{I_{1,k}}.
\end{eqnarray}}Now $g_{0}^{(K)}(z_{I_K})$ is just a constant and we can simply choose $\ln \mathcal{N}=g_{0}^{(K)}(z_{I_K})$
to exactly match it.

Summing up these equations from tier $K$ to tier 1 leads to 
\begin{eqnarray}
&&
\ln\tilde{\Psi}_+(\vec{z}) - \ln\Theta(\vec{z}) \nonumber\\
&=&
\sum_{t=1}^{K}
\sum_{k=1}^{C_K^t} (\hat{g}^{(t)}_{I_{K-t,k}} - N_{I_{K-t,k}} \hat{\kappa}_{I_{K-t,k}})\mathcal{Z}_{I_{K-t,k}}
\end{eqnarray}
with $\ln\Theta(\vec{z})=\sum_{t=1}^{K}
\sum_{k=1}^{C_K^t} N_{I_{K-t,k}}\kappa(b_{I_{K-t,k}}+\vec{w}^T_{I_{K-t,k}}\vec{z}_{I_{K-t,k}})+\ln\mathcal{N}$.
Suppose by choosing $\{N_{I_{K-t,k}},b_{I_{K-t,k}},\vec{\omega}_{I_{K-t,k}}\}$, we can make
\begin{eqnarray}
|\hat{g}^{(n)}_{I_{K-t,k}} - N_{I_{K-t,k}} \hat{\kappa}_{I_{K-t,k}}| \le (\epsilon/K)^{K-t},\label{eq:approx}
\end{eqnarray}
for an arbitrarily small $\epsilon$, then
{\small
\begin{eqnarray}
&&|\ln\tilde{\Psi}_+(\vec{z}) - \ln\Theta(\vec{z})|\nonumber\\
&\le&
\sum_{t=1}^{K}
\sum_{k=1}^{C_K^t} |\hat{g}^{(n)}_{I_{K-t,k}} - N_{I_{K-t,k}} \hat{\kappa}_{I_{K-t,k}}|\cdot|\mathcal{Z}_{I_{K-t,k}}| \nonumber\\
&\le&
\sum_{t=1}^{K}
C_{K}^t (\epsilon/K)^{K-t}
=
(1+\epsilon/K)^K-1
\le 2\epsilon,
\end{eqnarray}}for $0\le \epsilon<1$ and using $|\mathcal{Z}_{I_{K-t,k}}|=1$.
This show that if Eq. \eqref{eq:approx} holds, then
by an recursive procedure, we can construct an approximation to $\ln\tilde{\Psi}_+(\vec{z})$ arbitrarily well using NPS.

\subsection{Proof of Eq. \eqref{eq:approx}}
We show that for any tolerance $ \delta > 0 $, there exist integer $ N $, and real parameters $ b $ and $ \vec{\omega}$ such that
\begin{eqnarray}
\left| \hat{g} - N \hat{\kappa}(b, \vec{\omega}) \right| \leq \delta,\label{eq:approx2}
\end{eqnarray}
which proves Eq. \eqref{eq:approx}. For simplicity, the subscripts for symbols in Eq. \eqref{eq:approx} have been omitted. The key observation is that the Fourier coefficients $\hat{\kappa}(b, \vec{\omega})$ can be arbitrarily small,
see Eq. \eqref{eq:asymp}. Specifically, from the small-$\omega$ expansion of $ \hat{\kappa}(b, \vec{\omega})$, we know that for small $\vec{\omega}$, the Fourier coefficient behaves as:
\begin{eqnarray}
\hat{\kappa}(b, \vec{\omega}) &=& A(b) \prod_k \omega_k^{x_k} + O\left( \|\omega\|^{\sum x_k + 2} \right),
\end{eqnarray}
where $ A(b) $ is an analytic function of $ b $. 
As $\vec{\omega} \to 0 $, $ \hat{\kappa}(b, \vec{\omega})$ goes to 0 continuously, which means that we can make $ \hat{\kappa}(b, \omega) $ arbitrarily small by choosing $\vec{\omega}$ sufficiently small. [If $A(b)\equiv 0$, we can always use the higher order term whose coefficient is nonzero. This is possible as the assumption is that $\kappa(x)$ is not a polynomial of degree less than $K$.]

For simplicity, we assume that $\hat{g}$ and $\hat{\kappa}(b,\vec{\omega})$ are of the same sign, otherwise, we can flip the sign of $\hat{\kappa}(b,\vec{\omega})$ by making
the sign of one $\omega_k$ with $x_k=1$ in $\vec{\omega}$ 
flipped, see the discussion after Eq. \eqref{eq:asymp}.
Now, we define the integer $N =\operatorname{round}\left( \frac{\hat{g}}{\hat{\kappa}(b,\vec{\omega})}\right)$,
then the error due to rounding is given by
\begin{eqnarray}
\left| \frac{\hat{g}}{\hat{\kappa}(b,\vec{\omega})} - N \right|
\leq \frac{1}{2},
\end{eqnarray}
or equivalently,
\begin{eqnarray}
\left| \hat{g} - N \hat{\kappa}(b, \vec{\omega}) \right|
\leq \frac{|\hat{\kappa}(b, \vec{\omega})|}{2}.
\end{eqnarray}
Since $\hat{\kappa}(b, \vec{\omega})$ can be made arbitrarily small by choosing $\vec{\omega}$ sufficiently small, we can ensure that the error is less than $ \delta $ by choosing
$|\hat{\kappa}(b, \vec{\omega})| \le \delta$ using small
enough $\vec{\omega}$, which completes the proof of Eq. \eqref{eq:approx2}.

\bibliography{main}

\end{document}